\documentclass[conference,letterpaper]{IEEEtran}
\IEEEoverridecommandlockouts
\usepackage{cite}
\usepackage{amsmath,amssymb,amsfonts,amsthm}
\usepackage{algpseudocode}
\usepackage{graphicx}
\usepackage{textcomp}
\usepackage{xcolor}
\usepackage{booktabs}
\usepackage{algorithm}
\usepackage{enumerate}

\def\BibTeX{{\rm B\kern-.05em{\sc i\kern-.025em b}\kern-.08em
    T\kern-.1667em\lower.7ex\hbox{E}\kern-.125emX}}

\newcommand{\UAV}{UAV}
\newcommand{\GBS}{GBS}
\newcommand{\GU}{GU}
\newtheorem{proposition}{Proposition}

\begin{document}

\title{Learn to Access and Backhaul the Sky: Multi-Scale Radio Map Guided Multi-UAV Cooperation}


\author{\IEEEauthorblockN{Yifeng Yuan, and Shijian Gao}
\IEEEauthorblockA{
IoT Thrust, The Hong Kong University of Science and Technology (Guangzhou), China
}
}

\maketitle

\begin{abstract}
Driven by the emerging low-altitude economy, uncrewed aerial vehicle (UAV) swarms offer flexible integrated air-ground access and backhaul. However, providing seamless connectivity is difficult due to the interdependent dynamics of user mobility and building blockages in these 3D scenarios. These factors create rapidly shifting bottlenecks in end-to-end paths. Furthermore, the multi-dimensional nature of joint control limits the effectiveness of traditional heuristics.
To address these challenges, a \textbf{\underline{M}}ulti-Scale \textbf{\underline{R}}adio \textbf{\underline{M}}ap-\textbf{\underline{G}}uided (MRMG) framework is proposed. The MRMG framework handles heterogeneous dynamics by integrating three distinct levels of radio information: global-level maps provide regional coverage insights, local-level maps capture neighborhood-scale service conditions, and link-level maps characterize high-resolution channel features. This design effectively decouples macro-movement from micro-link adaptation. To yield long-term performance improvements, A multi-agent reinforcement learning (MARL) controller learns cooperative policies for UAV movement, next-hop selection, and transmit-power control. Simulation results show that the MRMG framework not only improves network throughput but also significantly bolsters cell-edge service, nearly doubling the 5th-percentile user rate.
\end{abstract}

\begin{IEEEkeywords}
air-ground network, UAV swarm, integrated access and backhaul, radio maps, multi-agent learning
\end{IEEEkeywords}

\section{Introduction}
\label{sec:intro}

With the rapid expansion of the low-altitude economy, air-ground integrated networks have emerged as a pivotal architecture for future wireless evolution~\cite{9112744}. Uncrewed aerial vehicle (UAV) swarms provide vital aerial access and relaying support, bridging the gap between terrestrial infrastructure and dynamic user demand~\cite{gao2026integrated, liu2024prospective}. However, in 3D air-ground scenarios, UAVs must balance a dual role: providing high-quality ground-user access while maintaining reliable backhaul links to ground base stations (GBSs)~\cite{LLLK_CCHI_2025}. This creates a complex spatial coupling between access, backhaul, and terrestrial interference. A location optimized for user coverage often suffers from severe building blockage toward the backhaul GBS, while aerial transmissions may simultaneously degrade ground-side reception through line-of-sight (LoS) interference.

Existing research has attempted to address these challenges through either offline optimization or online learning. Studies focused on access-side optimization alone~\cite{cai2025multi} often ignore the end-to-end (E2E) bottleneck created by the backhaul link. Offline placement studies that consider integrated access and backhaul rely on user distributions and node coordinates to optimize UAV positions and backhaul topology~\cite{sabzehali2022optimizing, mahmood2023joint, zhang2023deployment}. Online learning approaches further account for dynamic UAV movement, routing, and transmit power control~\cite{wang2023deep, wang2025dynamic, sheng2024enabling, ding2022packet, wang2022learning, alam2024joint}. However, both categories of methods are insufficient in urban 3D environments, where building blockages cause extreme spatial variations in signal propagation that coordinates alone cannot capture. While radio map construction for low-altitude UAV networks has attracted recent interest~\cite{lu2026transfer, gao2026farm}, their use in UAV placement and routing to provide environment-aware propagation priors has also been explored~\cite{li2022derivative, li2024radio, li2025radio}. Yet existing solutions remain largely limited to static deployment or predictive planning under fixed trajectories. A significant gap remains in developing an effective framework that incorporates multi-scale propagation awareness into online, cooperative UAV swarm decision-making.

To fill this gap, this paper proposes a Multi-Scale Radio Map-Guided (MRMG) framework for online cooperative UAV operation. The framework transforms raw radio information into three complementary scales to handle the heterogeneous dynamics of 3D air-ground networks. First, global-level maps provide regional coverage insights for macro-path planning. Second, local-level maps capture neighborhood-scale service conditions, and third, link-level maps characterize transient backhaul feasibility. These radio-aware observations are then incorporated into a MARL controller based on multi-agent proximal policy optimization (MAPPO), enabling the UAV swarm to learn long-term cooperative policies. The learned policy jointly determines UAV movement, next-hop selection, and transmit-power control during online operation. Given the resulting aerial backhaul topology, a topology-aware flow allocation step then computes feasible E2E user rates under access and multi-hop backhaul constraints. Consequently, this approach effectively decouples macro-movement from micro-link adaptation. Simulations demonstrate the effectiveness of the proposed MRMG framework in complex urban scenarios. Compared to state-of-the-art baselines, MRMG improves service coverage by 6.8 percentage points and significantly bolsters cell-edge performance, nearly doubling the 5th-percentile user rate. 
These gains confirm that the multi-scale integration of radio maps is essential for maintaining robust E2E connectivity in highly dynamic and obstructed 3D environments.

\section{System Modeling and Problem Formulation}
\label{sec:system}

We consider an air-ground network comprising a swarm of $M$ UAVs, $N_{\mathrm{B}}$ ground base stations (GBSs), and $K$ ground users (GUs). At slot $t$, GBS $b$ is fixed at $\mathbf{g}_b$, while UAV $m$ and GU $k$ are positioned at $\mathbf{p}_m(t)$ and $\mathbf{u}_k(t)$, respectively. GUs are served either by a GBS or via a UAV-aided relay path, with traffic forwarded to GBS through aerial backhaul. The disjoint sets of users served by UAV $m$ and GBS $b$ are denoted as $\mathcal{K}_m(t)$ and $\mathcal{K}_b(t)$. In each slot, every UAV jointly optimizes its displacement, next-hop selection, and transmit power to balance access quality, backhaul feasibility, and air-to-ground interference.

\begin{figure}[t]
    \centering
    \includegraphics[width=0.95\columnwidth]{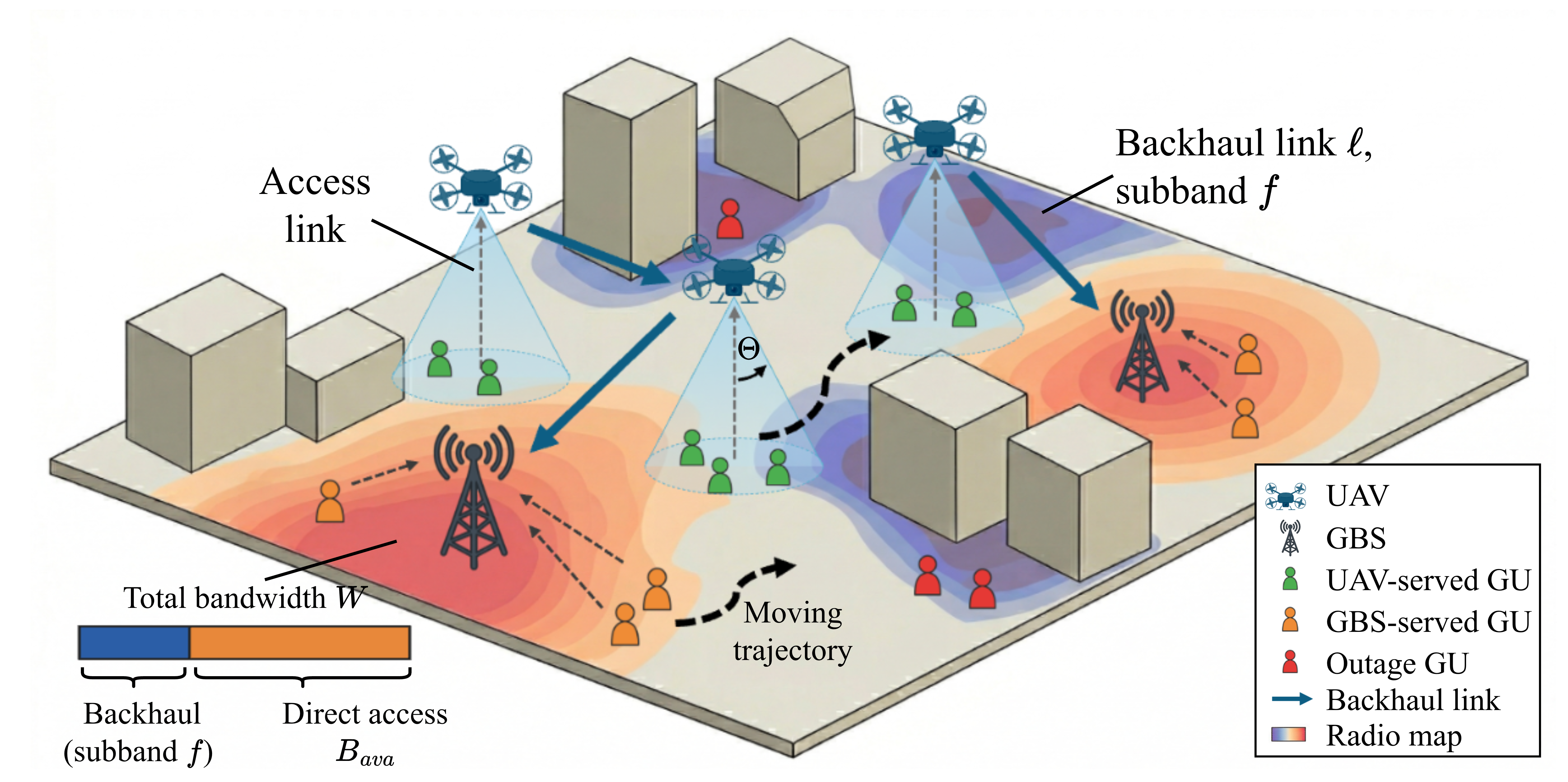}
    \caption{The considered air-ground network: UAVs provide uplink access to ground users and backhaul traffic to GBSs via multi-hop aerial links.}
    \label{fig:scenario}
\end{figure}

\subsection{Link Model}

We model three categories of links: \GU{}--\UAV{} access links, aerial backhaul links, and \GU{}--\GBS{} direct links.
The total uplink bandwidth $W$ is divided into $F$ orthogonal subbands $\mathcal{F}=\{1,\dots,F\}$, each of bandwidth $B_{\mathrm{sub}}=W/F$.
Each \UAV{} operates on one assigned subband; backhaul links on different subbands do not interfere, though aerial transmissions may act as unintended A2G interferers at non-intended \GBS{} receivers.
All large-scale channel gains are read from precomputed radio maps (Section~\ref{subsec:radiomap}), except \UAV{}--\UAV{} links, which use a distance-dependent air-to-air model.

\subsubsection{\textbf{\GU{}--\UAV{} Access Links}}

User~$k$ can associate with \UAV{}~$m$ only when it falls within the coverage cone,
\begin{equation}
    \arctan\!\left(\frac{\|\mathbf{u}_k^\perp(t)-\mathbf{p}_m^\perp(t)\|_2}{p_{m,z}(t)}\right)\leq\Theta,
\end{equation}
where $\mathbf{u}_k^\perp$ and $\mathbf{p}_m^\perp$ are horizontal coordinates, $p_{m,z}$ is the \UAV{} altitude, and $\Theta$ is the half-beam angle.
Since each \UAV{} occupies a distinct subband, inter-\UAV{} access interference is absent.
For $k\in\mathcal{K}_m(t)$, bandwidth is equally shared, giving $B_{k,m}^{\mathrm{acc}}(t)={B_{\mathrm{sub}}}/{|\mathcal{K}_m(t)|}$ and the uplink access rate
\begin{equation}
R_{k,m}^{\mathrm{acc}}(t)=
  B_{k,m}^{\mathrm{acc}}(t)
  \log_2\!\left(
    1+\frac{P_{\mathrm{GU}}\,g_{k,m}(t)}
           {N_0\,B_{k,m}^{\mathrm{acc}}(t)}
  \right),
\label{eq:uav_acc_rate}
\end{equation}
where $P_{\mathrm{GU}}$ is \GU{} transmit power, $g_{k,m}(t)$ is the UAV-to-ground channel gain, and $N_0$ is the noise power spectral density.

\subsubsection{\textbf{Aerial Backhaul Links}}

Each active backhaul link $\ell=(m,n)\in\mathcal{L}(t)$ carries transmissions from node~$m$ to its selected next hop $n\in(\mathcal{M}\cup\mathcal{B})\setminus\{m\}$ over a dedicated subband, giving link capacity
\begin{equation}
C_\ell(t)=
  B_{\mathrm{sub}}
  \log_2\!\left(
    1+\frac{P_m(t)\,g_{m,n}(t)}
           {N_0\,B_{\mathrm{sub}}}
  \right),
\label{eq:backhaul_cap}
\end{equation}
where $g_{m,n}(t)$ is the GBS-to-UAV channel gain when $n$ is a \GBS{}, or the air-to-air gain when $n$ is another \UAV{}.
The aggregate flow on each link is bounded by its capacity:
\begin{equation}
\sum_{j:\,\ell\in\mathcal{P}_j(t)} r_j(t)
  \leq C_\ell(t),
  \quad\forall\ell\in\mathcal{L}(t),
\label{eq:backhaul_flow}
\end{equation}
where $\mathcal{P}_k(t)$ is the backhaul path of user~$k$ and $r_j(t)$ is the allocated end-to-end rate.

\subsubsection{\textbf{\GU{}--\GBS{} Direct Links}}

At \GBS{}~$b$, the $N_b^{\mathrm{occ}}(t)$ terminating backhaul links each occupy one subband, leaving residual bandwidth $B_b^{\mathrm{ava}}(t)=W-N_b^{\mathrm{occ}}(t)B_{\mathrm{sub}}$ for the $|\mathcal{K}_b(t)|$ direct users, so $B_{k,b}^{\mathrm{acc}}(t)={B_b^{\mathrm{ava}}(t)}/{|\mathcal{K}_b(t)|}$.
Aerial transmissions directed to other next hops impose aggregate interference
\begin{equation}
I_b^{\mathrm{UAV}}(t)=
  \sum_{m\in\mathcal{M}:\,n_m(t)\neq b}
    P_m(t)\,g_{m,b}(t),
\label{eq:gbs_interference}
\end{equation}
and the direct-access rate of user~$k\in\mathcal{K}_b(t)$ is
\begin{equation}
R_{k,b}^{\mathrm{acc}}(t)=
  B_{k,b}^{\mathrm{acc}}(t)
  \log_2\!\left(
    1+\frac{P_{\mathrm{GU}}\,g_{k,b}(t)}
           {N_0\,B_{k,b}^{\mathrm{acc}}(t)+I_b^{\mathrm{UAV}}(t)}
  \right),
\label{eq:gbs_acc_rate}
\end{equation}
where $g_{k,b}(t)$ and $g_{m,b}(t)$ are the GBS-to-ground and GBS-to-UAV channel gains, respectively.

\subsection{Problem Formulation}

We aim to maximize a weighted long-term utility that captures both aggregate throughput and service coverage. Depending on whether user~$k$ is served directly by a \GBS{} or through the aerial-relay path, the delivered rate is
\begin{equation}
\widehat{r}_k(t)=
\begin{cases}
  R_{k,b}^{\mathrm{acc}}(t), & k\in\mathcal{K}_b(t),\\
  r_k(t),                    & k\in\mathcal{K}_m(t),
\end{cases}
\label{eq:delivered_rate}
\end{equation}
where $r_k$ is the allocated end-to-end rate for aerially served users. At slot $t$, the action of \UAV{} $m$ is
\begin{equation}
\mathbf{a}_m(t)=\big(\mathbf{d}_m(t),\,n_m(t),\,P_m(t)\big), \qquad m\in\mathcal{M},
\label{eq:joint_action_pf}
\end{equation}
where $\mathbf{d}_m(t)$, $n_m(t)$, and $P_m(t)$ denote the displacement, selected next hop, and transmit power, respectively. Let $\pi$ denote the joint control policy over all \UAV{}s. The optimization problem, denoted $\textbf{P0}$, is formulated as
\begin{subequations}\label{eq:P0}
\begin{alignat}{2}
  & \max_{\pi,\,\{r_k(t)\}} &\;
  & \mathbb{E}_{\pi}\!\left[\sum_{t=0}^{T-1}\!\left(
      \lambda_r\sum_{k\in\mathcal{K}} \hat{r}_k(t)
      +\lambda_c\sum_{k\in\mathcal{K}} \mathbb{I}\!\left(\hat{r}_k(t)\ge R_{\min}\right)
    \right)\right] \notag \\
  & \text{s.t.} &\;
  & \mathbf{d}_m(t)\in\mathcal{D}_m(t),
    \;\forall m\in\mathcal{M},
    \label{eq:pf_motion}\\
  & &\; & n_m(t)\in(\mathcal{M}\cup\mathcal{B})\setminus\{m\},
    \;\forall m\in\mathcal{M},
    \label{eq:pf_next_hop}\\
  & &\; & P_m(t)\in\mathcal{P},
    \;\forall m\in\mathcal{M},
    \label{eq:pf_power}\\
  & &\; & 0\le r_k(t)\le R_{k,m}^{\mathrm{acc}}(t),
    \;\forall k\in\mathcal{K}_m(t),\,m\in\mathcal{M},
    \label{eq:pf_access}\\
  & &\; & \sum_{j:\,\ell\in\mathcal{P}_j(t)} r_j(t)\le C_{\ell}(t),
    \;\forall \ell\in\mathcal{L}(t).
    \label{eq:pf_backhaul}
\end{alignat}
\end{subequations}
Here, $T$ is the episode horizon, $\mathcal{D}_m(t)$ is the feasible motion set of \UAV{} $m$, $\mathcal{P}$ is the discrete transmit-power set, $\lambda_r$ and $\lambda_c$ are nonnegative weights balancing throughput and coverage, $R_{\min}$ is the minimum target rate, and $\mathbb{I}(\cdot)$ is the indicator function. Constraint~\eqref{eq:pf_access} limits each aerially served user's rate to the access link capacity, and~\eqref{eq:pf_backhaul} enforces the backhaul capacity along every active link. Problem $\textbf{P0}$ is challenging due to the strong coupling among multi-\UAV{} control, location-dependent propagation, A2G interference, and multi-hop backhaul constraints.

\section{Radio-Map-Guided Learning for Cooperative UAV Control}
\label{sec:method}
\begin{figure*}[!t]
    \centering
    \includegraphics[width=\textwidth]{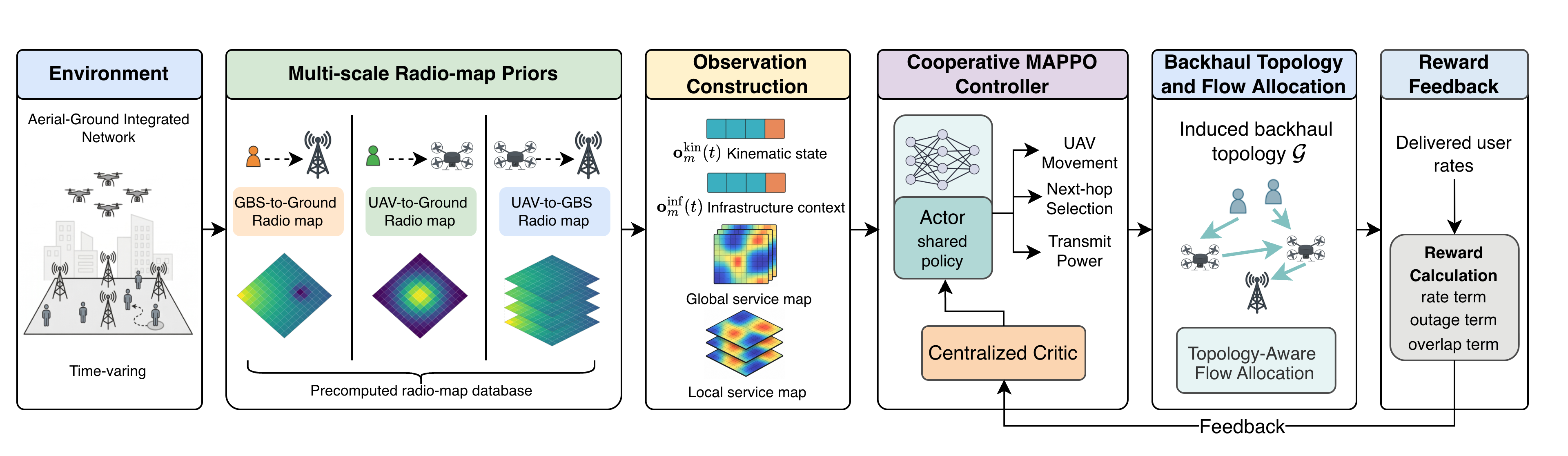}
    \caption{Overview of the proposed MRMG framework. Multi-scale observations constructed from precomputed radio maps are fed into a MAPPO-based cooperative controller, followed by topology-aware flow allocation for E2E service optimization.}
    \label{fig:framework}
\end{figure*}

We develop a radio-map-guided framework for online cooperative multi-\UAV{} control, where GU mobility and urban blockage jointly induce time-varying and location-dependent channel conditions.

\subsection{Multi-Scale Radio Map Construction}
\label{subsec:radiomap}

The proposed controller relies on three types of precomputed radio maps derived from the propagation environment: GBS-to-ground, UAV-to-ground, and GBS-to-UAV. These maps are organized into a global ground-side scale, a local UAV-centered scale, and a pairwise backhaul-link scale. The first two scales are represented as spatial service maps that provide infrastructure-side and local access context, while the third scale is represented as per-\GBS{} backhaul-rate features for candidate \UAV{}--\GBS{} links.

\textbf{1) GBS-to-ground radio map:}
For each ground location $\mathbf{x}$, the map records the strongest GBS-side channel gain,
\begin{equation}
  \bar{g}_{\mathrm{GBS}}(\mathbf{x}) = \max_{b \in \mathcal{B}} g_{b \to \mathbf{x}},
  \label{eq:gbs_radiomap}
\end{equation}
which serves as a global infrastructure-side prior for direct access and \GBS{} reachability. It is further organized into the global service map $\mathbf{o}_m^{\mathrm{glo}}$, which is a coarse $G\times G$ grid consisting of user density, $\bar{g}_{\mathrm{GBS}}(\mathbf{x})$, and a Gaussian position marker at $\mathbf{p}_m(t)$. This map provides a global view for macro-level \UAV{} movement.

\textbf{2) UAV-to-ground radio map:}
For \UAV{} $m$, we query the pre-computed radio-map database at its current position $\mathbf{p}_m(t)$ over a local ground patch centered below the \UAV{}. The resulting large-scale channel gain between \UAV{} $m$ and each ground location $\mathbf{x}$ characterizes the corresponding \GU{}--\UAV{} access channel and is denoted by
\begin{equation}
  \bar{g}_{\mathrm{UAV},m}(\mathbf{x};\mathbf{p}_m(t)),
  \label{eq:uav_radiomap}
\end{equation}
which characterizes nearby access conditions around \UAV{} $m$. This radio information is organized into the local service map $\mathbf{o}_m^{\mathrm{loc}}$, which is a fine-grained $P\times P$ ego-centric grid consisting of user density and $\bar{g}_{\mathrm{UAV},m}(\mathbf{x};\mathbf{p}_m(t))$. This local scale map captures nearby demand and local propagation conditions for positioning and power control.

\textbf{3) GBS-to-UAV Radio Map:}
For each candidate \GBS{} $b$, a precomputed three-dimensional radio map records the channel gain over the discretized \UAV{} flight volume $\mathcal{V}_{\mathrm{air}}$,
\begin{equation}
  \bar{g}_{\mathrm{GBS},b}(\mathbf{p}), \quad \mathbf{p} \in \mathcal{V}_{\mathrm{air}},
  \label{eq:gbs_uav_radiomap}
\end{equation}
where $\mathcal{V}_{\mathrm{air}}$ denotes the set of discrete three-dimensional grid points spanning the \UAV{} altitude range. At the current position $\mathbf{p}_m(t)$, the channel gain to \GBS{} $b$ is read as $g_{m,b}(t)=\bar{g}_{\mathrm{GBS},b}(\mathbf{p}_m(t))$ and used to compute a potential backhaul-rate feature $\tilde{R}_{m,b}(t)$, defined analogously to~\eqref{eq:backhaul_cap} with $P_m(t)$ replaced by $P_{\max}$. Using the maximum transmit power yields a position-dependent upper-bound estimate of the achievable backhaul rate, providing the controller with a channel-quality prior. This feature guides each \UAV{} in selecting between a direct \GBS{} connection and multi-hop forwarding.

\subsection{Policy Design with Radio-Aware Observations}
The joint decisions of multiple \UAV{}s form a cooperative 
multi-agent problem, since each action affects not only the local access conditions of the acting agent but also the shared backhaul topology and A2G interference seen by others. However, a fully centralized controller is impractical for real-time execution. We therefore adopt MAPPO under the centralized-training decentralized-execution (CTDE) paradigm, where a shared critic observes the global network state during training while decentralized actors enable scalable online execution.

\subsubsection{Observation Design}
\label{sec:obs_design}

The local observation of \UAV{} $m$ at slot $t$ is organized into four streams,
\begin{equation}
  \mathbf{o}_m(t)=\bigl(
    \mathbf{o}_m^{\mathrm{kin}}(t),\;
    \mathbf{o}_m^{\mathrm{inf}}(t),\;
    \mathbf{o}_m^{\mathrm{loc}}(t),\;
    \mathbf{o}_m^{\mathrm{glo}}(t)
  \bigr),
  \label{eq:local_obs}
\end{equation}
where $\mathbf{o}_m^{\mathrm{kin}}(t)$ contains the normalized 3-D position of \UAV{} $m$, a slot-state indicator, six boundary-clearance distances, and an agent one-hot identity vector, all obtained from onboard sensing. The infrastructure-context $\mathbf{o}_m^{\mathrm{inf}}(t)$ describes neighboring \UAV{}s and candidate \GBS{} connections; each \GBS{} aggregates the positions of connected \UAV{}s and the user distribution, and broadcasts a summary to all \UAV{}s over the backhaul link. Specifically, each neighboring \UAV{} is encoded by its relative displacement and current slot state, while each \GBS{} $b$ is encoded by its relative displacement together with the potential backhaul rate $\tilde{R}_{m,b}(t)$.

The main spatial context is provided by the two service maps defined in Section~\ref{subsec:radiomap}, both precomputed offline and queried at runtime based on the \UAV{}'s current position. The global map $\mathbf{o}_m^{\mathrm{glo}}(t)$ offers a global view of user density and GBS-side radio conditions to guide macro-level movement, whereas the local map $\mathbf{o}_m^{\mathrm{loc}}(t)$ captures nearby demand and local UAV-to-ground propagation to support fine-grained positioning and power control. Together, these components provide each actor with compact context for action selection.

\subsubsection{Action Space}

At each slot, every \UAV{} jointly selects three actions 
via the actor.
\begin{itemize}
  \item \textit{Movement} $\mathbf{d}_m(t)$: one of seven discrete
    options, including hovering and unit-step movements along each of the three coordinate axes in the positive or negative direction.
  \item \textit{Next-hop selection} $n_m(t)\in(\mathcal{M}\cup\mathcal{B})\setminus\{m\}$:
    the target node for aerial backhaul forwarding, which may be
    another \UAV{} or a \GBS{}.
  \item \textit{Transmit power} $P_m(t)\in\mathcal{P}$:
    selected from a discrete codebook
    $\mathcal{P}=\{0.125,\,0.25,\,0.5,\,1.0\}\cdot P_{\max}$.
\end{itemize}
The actor encodes $\mathbf{o}_m^{\mathrm{kin}}(t)$ and 
$\mathbf{o}_m^{\mathrm{inf}}(t)$ through separate MLPs and each service map through a two-layer CNN; the four streams are fused by an MLP that drives three independent action heads, which sample the three actions simultaneously.

\subsubsection{Reward Design}

The per-agent reward balances three objectives, namely throughput maximization, weak-user protection, and spatial load sharing, and is defined as
\begin{align}
  r_m(t) ={}&
    \frac{\alpha}{R_0}\sum_{k\in\mathcal{K}}\widehat{r}_k(t)
    +\frac{1-\alpha}{R_0}\sum_{k\in\mathcal{K}_m(t)}\widehat{r}_k(t)
    \notag\\
    &-\lambda_{\mathrm{out}}\,\Delta_m^{\mathrm{out}}(t)
    -\lambda_{\mathrm{col}}\,\psi_m(t),
  \label{eq:reward}
\end{align}
where $R_0$ is a rate normalization constant, $\alpha\in[0,1]$ is the team-sharing weight, $\mathcal{K}_m(t)$ is the user set served by \UAV{} $m$, and $\psi_m(t)$ is the spatial-overlap penalty.
The first two terms form a dual-level rate reward. The global term (weighted by $\alpha$) encourages each agent to improve the total network throughput, while the local term (weighted by $1-\alpha$) strengthens individual credit assignment by rewarding the traffic served by \UAV{} $m$ itself. This combination avoids a purely shared reward weakens individual credit assignment.

The third term penalizes weak-user outage based on the rate deficit below the target threshold. We define
\begin{equation}
  \Delta_m^{\mathrm{out}}(t)=
    \Phi^{\mathrm{out}}(t)-(1-\beta)\,\Phi_{-m}^{\mathrm{out}}(t),
  \label{eq:outage_delta}
\end{equation}
where $\Phi^{\mathrm{out}}(t)$ is a smooth rate-shortfall deficit over users below $R_{\min}$, and $\Phi_{-m}^{\mathrm{out}}(t)$ is the counterfactual deficit recomputed with \UAV{} $m$ excluded. Their difference isolates the marginal contribution of agent $m$ to weak-user protection, improving credit assignment while preserving team-level coordination.
The fourth term $\psi_m(t)$ penalizes spatial overlap among \UAV{}s, discouraging multiple agents from collapsing onto the same hotspot and promoting spatial load sharing.

\subsection{Topology-Aware Flow Allocation}

Given the joint actions $\{\mathbf{a}_m(t)\}$, the selected next-hop decisions determine the aerial backhaul topology. We represent this topology as a directed graph $\mathcal{G}(t)=(\mathcal{V},\mathcal{L}(t))$, where the vertex set $\mathcal{V}=\mathcal{M}\cup\mathcal{B}$ includes all \UAV{}s and \GBS{}s, and the link set $\mathcal{L}(t)=\{(m,n_m(t))\mid m\in\mathcal{M}\}$ contains the outgoing backhaul link selected by each \UAV{}. Under the given backhaul topology, the aerially served users share access and multi-hop backhaul resources; hence a topology-aware flow allocation step is required to determine their delivered rates~\cite{zhang2020co}.

Let $\mathcal{K}_{\mathrm{air}}(t)=\bigcup_{m\in\mathcal{M}}\mathcal{K}_m(t)$ denote the aerially served users, and let $\mathcal{P}_k(t)\subseteq\mathcal{L}(t)$ be the selected backhaul path of user~$k$ to a \GBS{}. For each aerially served user, we define the path-capacity weight $w_k(t)=\min_{\ell\in\mathcal{P}_k(t)} C_{\ell}(t)$, which represents the bottleneck capacity of its selected backhaul path. We then maximize a common normalized service ratio $\eta(t)\in[0,1]$, where each aerial flow is required to receive at least $\eta(t)w_k(t)$. This bottleneck-aware formulation prevents the allocation from serving only high-throughput paths while respecting the different capacity scales of multi-hop paths. The resulting weighted max-min flow allocation problem is
\begin{subequations}\label{prob:P1}
\begin{alignat}{2}
  \textbf{P1:}& \max_{\{r_k(t)\},\,\eta(t)} && \eta(t) \notag \\
  & \text{s.t.}
  && r_k(t)\geq w_k(t)\eta(t),
    \;\forall k\in\mathcal{K}_{\mathrm{air}}(t),
    \label{eq:wmm_fairness}\\
  && & 0\leq \eta(t)\leq 1,
    \label{eq:wmm_eta}\\
  && & 0\leq r_k(t)\leq R_{k,m}^{\mathrm{acc}}(t), \notag \\
  && & \forall m\!\in\!\mathcal{M},k\!\in\!\mathcal{K}_m(t),
    \label{eq:wmm_access}\\
  && & \sum_{k:\,\ell\in\mathcal{P}_k(t)} r_k(t)\leq C_\ell(t),
    \;\forall \ell\in\mathcal{L}(t).
    \label{eq:wmm_backhaul}
\end{alignat}
\end{subequations}
The constraints impose normalized fairness, bound $\eta(t)$, and enforce the access and shared-backhaul capacity limits.
\begin{proposition}
The optimal solution to \textbf{P1} is
\begin{equation}
  \eta^*(t) \hspace{-0.5mm}= \hspace{-0.5mm}\min\!\left(1,
\min_{k\in\mathcal{K}_{\mathrm{air}}} \frac{R_{k,m}^{\mathrm{acc}}(t)}{w_k(t)},
    \min_{\ell\in\mathcal{L}} \frac{C_\ell(t)}{\displaystyle\sum_{k:\,\ell\in\mathcal{P}_k} w_k(t)}
  \right).
  \label{eq:eta_opt}
\end{equation}
\end{proposition}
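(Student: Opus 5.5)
We argue in two directions: first that every feasible point of \textbf{P1} satisfies $\eta(t)\le\eta^*(t)$, and then that $\eta^*(t)$ is attained. Throughout we assume $w_k(t)>0$ for every $k\in\mathcal{K}_{\mathrm{air}}(t)$. This holds whenever each aerial user's path ends at a \GBS{} through links with positive capacity. If some $w_k(t)=0$, the corresponding ratios are read as $+\infty$ and that user imposes no constraint on $\eta(t)$. Since the problem is solved slot by slot, we drop the index $t$.

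\emph{Upper bound.} Take any feasible $(\{r_k\},\eta)$. Constraint \eqref{eq:wmm_eta} gives $\eta\le 1$. For each $k\in\mathcal{K}_m$, combining \eqref{eq:wmm_fairness} with \eqref{eq:wmm_access} gives $w_k\eta\le r_k\le R^{\mathrm{acc}}_{k,m}$, so $\eta\le R^{\mathrm{acc}}_{k,m}/w_k$. For each active link $\ell$ carrying at least one flow, summing \eqref{eq:wmm_fairness} over the users with $\ell\in\mathcal{P}_k$ and applying \eqref{eq:wmm_backhaul} gives
\begin{equation*}
\eta\sum_{k:\,\ell\in\mathcal{P}_k} w_k\le\sum_{k:\,\ell\in\mathcal{P}_k} r_k\le C_\ell .
\end{equation*}
This yields the third term. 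Links that carry no aerial flow contribute nothing, and their ratio is read as $+\infty$. Taking the minimum over all three families of bounds gives $\eta\le\eta^*$.

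\emph{Achievability.} Set $r_k=w_k\eta^*$ for all $k\in\mathcal{K}_{\mathrm{air}}$. Constraint \eqref{eq:wmm_fairness} then holds with equality. Constraint \eqref{eq:wmm_eta} holds because $\eta^*\le1$ and every ratio in \eqref{eq:eta_opt} is nonnegative. Constraint \eqref{eq:wmm_access} holds because $\eta^*\le R^{\mathrm{acc}}_{k,m}/w_k$. Constraint \eqref{eq:wmm_backhaul} holds because $\sum_{k:\ell\in\mathcal{P}_k} w_k\eta^*\le C_\ell$ by the link term of the minimum. Hence $(\{w_k\eta^*\},\eta^*)$ is feasible, and by the upper bound it is optimal. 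The optimal $\{r_k\}$ need not be unique, since slack capacity could be redistributed, but the optimal value is unique. The statement concerns $\eta^*$, so we will remark that the proportional allocation is one optimal solution.

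The only delicate step is the bookkeeping in the link bound. We must index each user's $w_k$ together with its own serving \UAV{} $m$ in the access term. We also must handle the degenerate cases (empty link loads, $w_k=0$, or the case where the overall minimum is attained by the constant $1$) so that the formula stays well defined. With the conventions fixed above, the argument is elementary because \textbf{P1} is a linear program with a single scalar objective.
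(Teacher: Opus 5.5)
Your proof is correct and follows the paper's route: you combine the fairness constraint with the access and backhaul constraints to get the three upper bounds on $\eta(t)$, and you attain them with the proportional allocation $r_k(t)=w_k(t)\eta^*(t)$. Your version is a bit more careful than the paper's. You chain $w_k\eta\le r_k\le R^{\mathrm{acc}}_{k,m}$ (and the analogous link inequality) for an arbitrary feasible point, whereas the paper asserts that the fairness constraint must be tight at every optimum. That assertion is only true in the without-loss-of-generality sense, since, as you observe, the optimal rate vector need not be unique.
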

\begin{proof}
Since \eqref{eq:wmm_fairness} must hold with equality at optimum,
we set $r_k(t) = w_k(t)\eta(t)$ for all $k \in \mathcal{K}_{\mathrm{air}}(t)$.
Substituting into the access constraint \eqref{eq:wmm_access} gives
$\eta(t) \leq R_{k,m}^{\mathrm{acc}}(t)/w_k(t)$ for each $k$,
and substituting into the backhaul constraint \eqref{eq:wmm_backhaul} gives
$\eta(t) \leq C_\ell(t)/\sum_{k:\,\ell\in\mathcal{P}_k} w_k(t)$ for each $\ell$.
Together with \eqref{eq:wmm_eta}, maximizing $\eta(t)$ over all feasible 
constraints yields \eqref{eq:eta_opt}.
\end{proof}

The corresponding optimal rate allocation is $r_k^*(t) = w_k(t)\,\eta^*(t)$ for each $k \in \mathcal{K}_{\mathrm{air}}(t)$, which equalizes the normalized service ratio $r_k^*(t)/w_k(t) = \eta^*(t)$ across all aerial users. This approach prevent severe rate degradation for users with weak multi-hop channel.

\section{Simulations}
\label{sec:results}
We evaluate the proposed framework by examining its overall service performance, the contribution of radio-map observations and joint control components, and its scalability with respect to the \UAV{} swarm size. Simulations are conducted in an urban environment, where large-scale channel gains are obtained from precomputed Sionna ray-tracing radio maps~\cite{sionna}. Ground users follow a Gauss--Markov random-walk mobility model. During training, a curriculum-based speed ramp-up is used to expose the policy to gradually increasing user mobility. The main simulation parameters are summarized in Table~\ref{tab:sim_params}.
\begin{table}[t]
  \centering
  \caption{Simulation parameters.}
  \label{tab:sim_params}
  \scriptsize
  \setlength{\tabcolsep}{3pt}
  \begin{tabular}{lc lc}
    \toprule
    Parameter & Value & Parameter & Value \\
    \midrule
    Area & $1000{\times}1000\,\text{m}^{2}$ & Total bandwidth ($W$) & 100\,MHz \\
    No.\ of \UAV{}s ($M$) & 3 & Subbands ($F$) / $B_{\mathrm{sub}}$ & 10 / 10\,MHz \\
    No.\ of \GBS{}s ($N_{\mathrm{B}}$) & 2 & User tx power ($P_{\mathrm{GU}}$) & 100\,mW \\
    No.\ of users ($K$) & 30 & Half-angle ($\Theta$) & $45^{\circ}$ \\
    \UAV{} altitude range & 50--150\,m & Min.\ rate ($R_{\min}$) & 10\,Mbps \\
    \UAV{} move step & 25\,m/slot & Training episodes & 1500 \\
    Carrier frequency & 4.9\,GHz & Steps/episode & 512 \\
    \bottomrule
  \end{tabular}
\end{table}
\begin{figure}[t]
    \vspace*{0.06in}
    \centering
    \includegraphics[width=0.85\columnwidth]{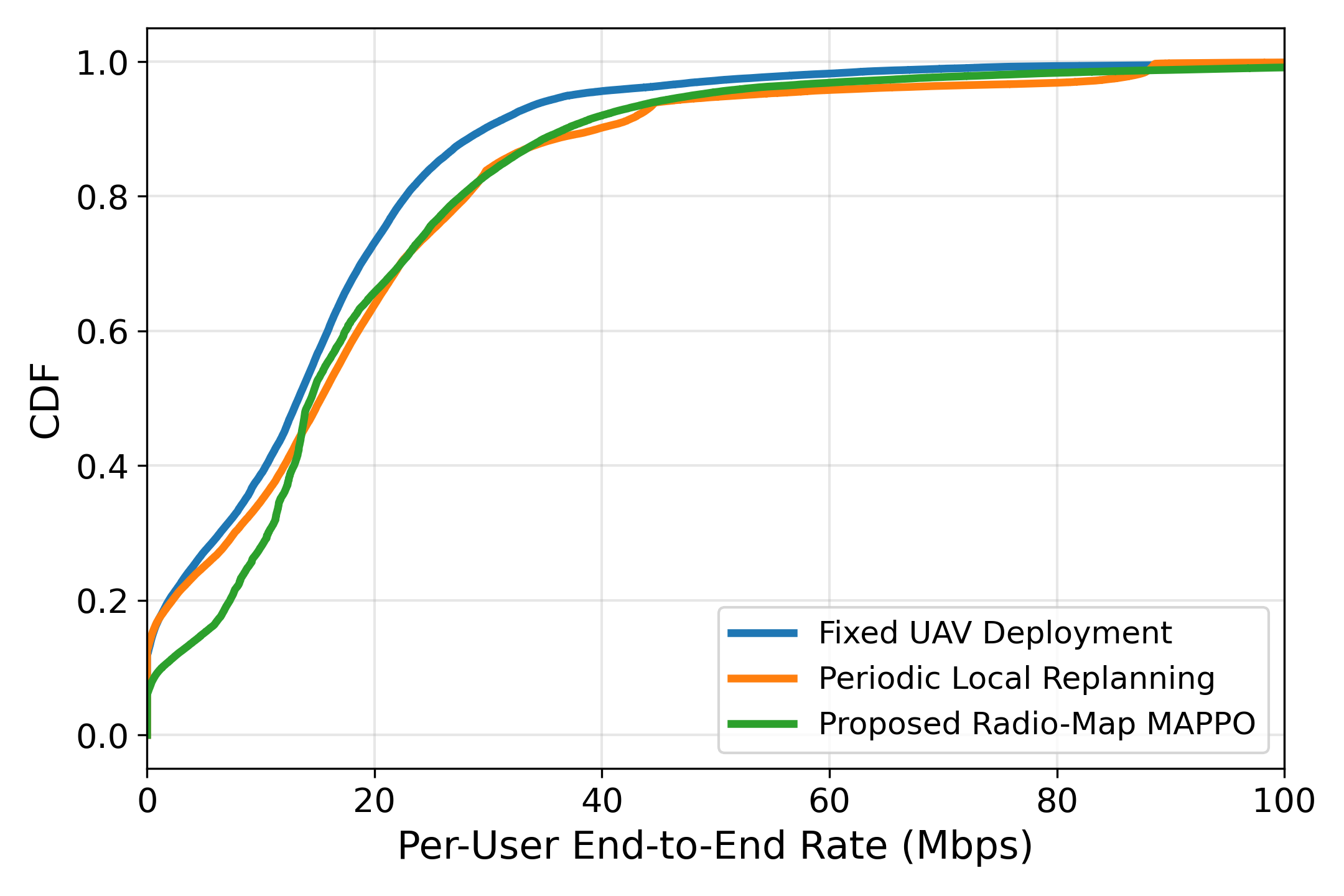}
    \caption{Empirical CDF of per-user end-to-end rates across all methods.}
    \label{fig:cdf}
\end{figure}
\addtolength{\topmargin}{0.06in}
We compare MRMG with four reference schemes spanning different service modes and control strategies.
\begin{itemize}
  \item \textbf{Terrestrial Only}: a ground-only benchmark in which all users are served by \GBS{}s without \UAV{} assistance.
  \item \textbf{Fixed UAV Deployment}: a static backhaul-aware placement baseline~\cite{sabzehali2022optimizing}, where \UAV{} positions are optimized to the initial user distribution and held fixed during execution, while backhaul routing is optimized under the resulting topology.
  \item \textbf{Periodic Local Replanning}: a heuristic control baseline that periodically updates \UAV{} decisions by searching candidate changes within a two-hop local neighborhood.
  \item \textbf{MRMG}: the proposed radio-map-guided MAPPO controller, which learns coordinated multi-\UAV{} decisions.
\end{itemize}

Table~\ref{tab:main_benchmark} reports the main benchmark results. Compared with Periodic Local Replanning, the proposed method increases the coverage ratio from $65.3\%$ to $72.1\%$ and improves the fifth-percentile rate from $0.69$ to $2.70$\,Mbps, while also achieving a higher average rate ($19.03$ vs.\ $18.51$\,Mbps). These gains are more pronounced in coverage and P5 rate than in average rate, indicating that the learned policy better supports users limited by unfavorable access or backhaul conditions. Fig.~\ref{fig:cdf} further illustrates this effect through the empirical CDF of per-user end-to-end rates.
\begin{figure*}[!t]
    \centering
    \includegraphics[width=0.9\textwidth]{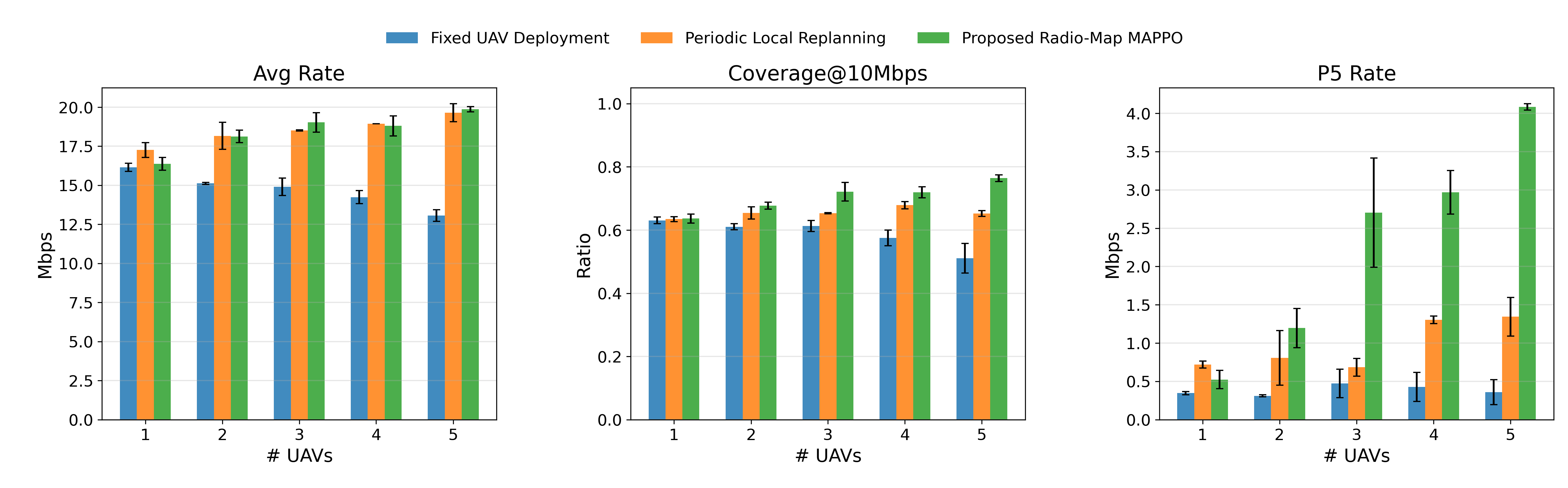}
    \caption{Performance versus number of \UAV{}s ($M=1$ to $5$). MRMG shows increasing advantages in coverage ratio and P5 rate as the swarm grows, while Fixed \UAV{} Deployment degrades in average rate due to misaligned static positioning.}
    \label{fig:uav_count}
\end{figure*}
\begin{table}[t]
    \centering
    \caption{Main benchmark comparison (mean $\pm$ std over 10 episodes).}
    \label{tab:main_benchmark}
    \footnotesize
    \resizebox{\columnwidth}{!}{
    \begin{tabular}{lccc}
        \toprule
        Method & Avg.\ rate & Cov.@10 & P5 rate \\
               & (Mbps)     & (\%)    & (Mbps) \\
        \midrule
        Terrestrial Only          & $15.51{\pm}0.15$ & $56.3{\pm}0.1$ & $0.00{\pm}0.00$ \\
        Fixed UAV Deployment      & $14.91{\pm}0.55$ & $61.3{\pm}1.7$ & $0.47{\pm}0.19$ \\
        Periodic Local Replanning & $18.51{\pm}0.04$ & $65.3{\pm}0.2$ & $0.69{\pm}0.12$ \\
        MRMG                      & $\mathbf{19.03{\pm}0.62}$ & $\mathbf{72.1{\pm}3.0}$ & $\mathbf{2.70{\pm}0.71}$ \\
        \bottomrule
    \end{tabular}
    }
\end{table}
The ablation results are reported in Table~\ref{tab:ablation}. Removing all radio-prior inputs decreases the coverage ratio from $72.1\%$ to $66.3\%$ and reduces the fifth-percentile rate from $2.70$ to $1.17$\,Mbps. Keeping only the backhaul radio prior also leads to lower coverage ($68.9\%$) and weaker low-rate user performance ($1.40$\,Mbps) than the full design, indicating that direct backhaul feasibility should be complemented by global and local service maps for access-and-backhaul control. The movement-and-routing-only and movement-only variants further show that routing and power control contribute to the final performance, especially for improving the lower tail of the user-rate distribution.
\begin{table}[t]
    \centering
    \caption{Ablation study (mean $\pm$ std over 10 episodes).}
    \label{tab:ablation}
    \footnotesize
    \resizebox{\columnwidth}{!}{
    \begin{tabular}{lccc}
        \toprule
        Variant & Avg.\ rate & Cov.@10 & P5 rate \\
                & (Mbps)     & (\%)    & (Mbps) \\
        \midrule
        Full Radio Prior                   & $\mathbf{19.03{\pm}0.62}$ & $\mathbf{72.1{\pm}3.0}$ & $\mathbf{2.70{\pm}0.71}$ \\
        Backhaul Radio Prior Only          & $17.96{\pm}0.94$ & $68.9{\pm}1.0$ & $1.40{\pm}0.00$ \\
        No Radio Prior                     & $17.15{\pm}0.76$ & $66.3{\pm}2.8$ & $1.17{\pm}0.35$ \\
        Movement and Routing Only          & $16.76{\pm}0.13$ & $69.2{\pm}0.1$ & $1.27{\pm}0.05$ \\
        Movement Only                      & $15.82{\pm}0.76$ & $67.2{\pm}1.1$ & $1.13{\pm}0.31$ \\
        \bottomrule
    \end{tabular}}
\end{table}
Fig.~\ref{fig:uav_count} shows performance versus swarm size. The average rate of Fixed \UAV{} Deployment drops as $M$ grows, while both online methods maintain a rising trend.
The advantages of the proposed method in coverage and P5 rate become more evident as $M$ increases. At $M=5$, the proposed method reaches approximately $79\%$ coverage and a $4.1$\,Mbps P5 rate, compared with $67\%$ and $1.3$\,Mbps for Periodic Local Replanning, indicating that the cooperative policy scales more effectively to larger swarms.

\section{Conclusions}
\label{sec:conclusion}
This paper addressed the challenge of online cooperative control in UAV-assisted air-ground integrated networks, where dynamic user mobility and building blockages create significant connectivity bottlenecks. By integrating three-level multi-scale radio maps, the proposed framework successfully decouples regional macro-path planning from transient link-level adaptation. By leveraging this multi-level information architecture, the MAPPO-based controller effectively optimizes joint decisions on UAV movement, routing, and power control under complex 3D constraints. Simulations confirm the superiority of the MRMG framework, which improves service coverage by 6.8 percentage points and doubles the 5th-percentile user rate compared to the existing baselines. These results demonstrate the framework's capability to provide robust, high-quality service to cell-edge users in dynamic environments. Future work will focus on lightweight radio-map reconstruction to enhance the system's autonomy during real-time operations.
\bibliographystyle{IEEEtran}
\bibliography{refs}

\end{document}